\newcommand{\complex}{\mathbb{C}}
\newcommand{\egesz}{\mathbb{Z}}
\newcommand{\ordo}{\mathcal{O}}
\newcommand{\lan}{{\boldsymbol \lambda}_N}
\newcommand{\mB}{\mathcal{B}}
\newcommand{\dperdk}{\frac{d}{d\kappa}}
\newcommand{\psivev}[1]{\bra{\Psi}#1\ket{\Psi}}
\newcommand{\lanvev}[1]{\bra{\lan}#1\ket{\lan}}
\newtheorem{thm}{Theorem}
\newcommand{\ket}[1]{{\left|#1\right\rangle}}
\newcommand{\bra}[1]{{\left\langle #1\right|}}
\numberwithin{equation}{section}
\begin{document}
\title{
\begin{flushright}
\small{CERN-TH-2019-200}
\end{flushright}
\vspace{.5cm}
$T\bar T$-deformation and long range spin chains}
\author{Bal\'azs Pozsgay$^1$, Yunfeng Jiang$^2$, G\'abor Tak\'acs$^3$\\
\parbox{0.7\textwidth}{\center  \small
  $^1$\textit{
MTA-BME Quantum Dynamics and Correlations Research Group,
    Department of Theoretical Physics,\\ Budapest University
    of Technology and Economics,\\ 1521 Budapest, Hungary}\\
  $^2$\textit{Theoretical Physics Department, CERN, \\1211 Geneva 23, Switzerland.}\\
$^3$\textit{
BME ``Momentum'' Statistical Field Theory Research Group,\\
    Department of Theoretical Physics,\\ Budapest University
of Technology and Economics,\\ 1521 Budapest, Hungary}
\small{\date{25th November 2019}}   
}}
\maketitle

\abstract{We point out that two classes of deformations of integrable models,
    developed completely independently, have deep connections and share the same algebraic
    origin. One class includes the $T\bar T$-deformation of  1+1 dimensional  integrable quantum
    field theory and related solvable irrelevant deformations proposed recently. The other class is
    a specific type of long range integrable deformation of quantum spin chains introduced a decade
    ago, in the context of  $\mathcal{N}=4$ super-Yang-Mills theory. We show that the detailed
    structures of the two deformations are formally identical and therefore share many features. Both
    deformations preserve integrability and lead to non-local deformed theories, resulting in a
    change of the corresponding  factorized $S$-matrices.  We also prove a factorisation formula for
    the expectation value of the operators which trigger the deformation on the lattice; similar
    results in quantum field theory play an essential role in the solvability of such
    deformations. We point out that the long range deformation is a natural counterpart of the
    $T\bar{T}$-deformation for integrable spin chains, and argue that this observation leads to
    interesting new avenues to explore.}

\section{Introduction}

Recently, a class of solvable irrelevant deformations of quantum field theories (QFT) have attracted
considerable attention. One of the most well-studied example of such solvable deformations
is the so-called $T\bar{T}$ deformation \cite{Smirnov-Zam-TTbar,Cavaglia:2016oda} (see also
\cite{Zamolodchikov:1991vx,mussardo-ttbar}), which can be 
defined for any 2d QFT. The $T\bar{T}$ deformation has many distinguished features, among which the
following two are most relevant to us:
\begin{enumerate}
\item \textbf{Solvability and integrability}. Usually, irrelevant deformations of QFT are highly
  ambiguous and complicated due to appearance of an infinite number of counter-terms. In contrast, the $T\bar{T}$ deformation
  is under much better control and solvable in an appropriate sense. In particular, when the original theory is an integrable quantum
  field theory (IQFT), the deformation preserves integrability. This allows for exact analytical results,
  especially when the original theory is a conformal field theory (CFT) or IQFT (cf. the reviews \cite{Giveon:2019fgr,Jiang:2019hxb} and references therein).
\item \textbf{Non-locality}. The deformed QFT becomes non-local, which can be seen classically by
  reformulating the $T\bar{T}$ deformation as coupling to various 2d gravity theories including the
  Jackiw-Teitelboim type gravity \cite{Dubovsky:2017cnj,Dubovsky:2018bmo}, massive ghost-free
  gravity \cite{Tolley:2019nmm} and random geometry \cite{Cardy-TTbar1,Cardy-TTbar2}. At the quantum level
  it can be seen from the fact that the asymptotic density of states exhibit a Hagedorn behaviour instead of
  the usual Cardy growth \cite{HagedornTTbar,Datta:2018thy,Aharony:2018bad}.
  Therefore the non-locality of $T\bar{T}$ deformed theories makes them different from usual local QFT and leads to
  a novel type of UV behaviour.
\end{enumerate}
Other solvable irrelevant deformations including the $J\bar{T}$ deformation \cite{Guica:2017lia},
deformation by higher spin irrelevant operators constructed from KdV currents
\cite{LeFloch:2019rut,LeFloch:2019wlf} and their various combinations
\cite{Chakraborty:2019mdf,Frolov:2019xzi}, which also share these two features. Another common characteristics of
these deformations is that the irrelevant operators which trigger such deformations are constructed
from certain bi-local combination of conserved currents. Exploiting the conservation of the currents and translational
invariance leads to a factorisation formula for the expectation value of such irrelevant
operators. This simple yet crucial observation was first pointed out by Zamolodchikov in 2004
\cite{Zam-TTbar} for the $T\bar{T}$ operator and lies at the heart of solvability of these
deformations.

Given the recent development of solvable irrelevant deformations for QFT, a natural and intriguing
question is whether such deformations exist for \emph{integrable lattice models} such as quantum
spin chains, as it was stated in the paper \cite{Smirnov-Zam-TTbar}: {\it ``Connections of such
``effective theories'' with lattice integrability seems an interesting question to
explore.''}. In this work we point out that not only do such deformations exist for integrable quantum
spin chains, but they had been constructed
already a decade ago, although in disguise! The $T\bar{T}$-like deformation corresponds to a
specific type of integrable long-range deformation studied in \cite{beisert-long-range-2}, where
the generator of the deformation is a bi-local combination of the charges.

Let us discuss this point in more detail. The spin chain analogue of the  $T\bar{T}$-deformation in integrable QFT
should capture the two important features mentioned above, namely it must \emph{preserve
  integrability} and the deformed theory should \emph{become non-local} in an appropriate sense. Usual
integrable quantum spin chains such as the Heisenberg spin chain involve only nearest neighbor
interactions. Spin chains that involve interactions between more sites are called long-range spin
chains and are much less studied.  Interestingly, a wide class of long range deformed spin chains
with the desired properties were introduced by Bargheer, Beisert and Loebbert in
\cite{beisert-long-range-2} (see also \cite{long-range-boundary,beisert-xxz-deformation}).
The motivation of these studies was to find a systematic way to construct the
dilatation operator in planar $\mathcal{N}=4$ super-Yang-Mills theory at higher loop orders. For
that purpose, the requirements are essentially the same: the dilatation operators at higher orders
should preserve integrability and the interacting range should increase order by order in the 't Hooft
coupling. The emerging non-locality is not very severe: the deformed Hamiltonian and higher
conserved charges remain extensive and quasi-local (see the main text for details).

According to \cite{beisert-long-range-2}, there are two main types of non-trivial long-range deformations which
preserve integrability: the boost-type and the bi-local-type. The boost-type deformation modifies the dispersion relation,
while the bi-local-type deformation modifies the factorised $S$-matrix of the quasi-particle excitations.

The boost-type deformations were treated in detail by one of the authors in \cite{sajat-longcurrents}, where
it was realised that they involve certain generalised current operators. The physical
meaning of these generalised currents is rather simple: they describe the flow of a conserved quantity
under a time evolution generated by some other conserved charge. As special cases they also include
the physical current operators, which describe the flow under the fundamental spin chain Hamiltonian.

Another motivation for the study of such current operators originates from the recently introduced
Generalised Hydrodynamics (GHD), which describes the large scale dynamics of integrable models out
of equilibrium. To the leading order in a hydrodynamic expansion it captures the ballistic part of the
transport \cite{doyon-GHD,jacopo-GHD}. The theory is based on a local density approximation and the
continuity relations for the conserved charges, and it leads to a generalised Boltzmann-type
equation involving all charges. For this purpose it is essential to know the mean values of the
currents in any finite density state. A physically motivated conjecture for the currents was given in
\cite{doyon-GHD,jacopo-GHD}. For massive integrable QFT it was already proven in \cite{doyon-GHD},
and a proof for the spin chains was given in \cite{sajat-currents}. Later it was realised in
\cite{sajat-longcurrents} 
that the connection to the boost-deformed long range spin chains provides a rather simple derivation of the
results of \cite{sajat-currents}.

In the present work we treat the long range deformations of the bi-local type for which the generator of
the deformation is a certain bi-local combination of the conserved charges. As explained below, after proper rewriting
the perturbing operators take essentially the same form as the perturbing irrelevant operators in QFT: they are composed
of an anti-symmetric combination of charge densities and generalised currents, a connection which went unnoticed in
\cite{beisert-long-range-2}. Applying Zamolodchikov's argument to the lattice case we prove a factorisation formula for the perturbing operators on the lattice. This demonstrates that the algebraic construction is completely the same as for the $T\bar{T}$ deformation in the QFT. As a direct consequence, the $S$-matrix is deformed by a phase factor involving the charge eigenvalues, which
echoes the fact that the $T\bar{T}$-deformation modifies the $S$-matrix of QFT by multiplying it with a CDD factor.

The paper is composed as follows. In section~\ref{sec:TTbar}, we give a brief review of the salient
features of the solvable irrelevant deformations. In section~\ref{sec:localchain}, we introduce
local integrable spin chains and a class of operators constructed from specific combinations of the
current and charge density operators. We then prove a lattice version of the factorisation formula
for the mean values of these operators. In section~\ref{sec:longchain}, we discuss integrable
long-range deformations of local spin chains. We mainly focus on the bi-local-type deformation and
point out the intimate relationship between this deformation and the solvable irrelevant
deformations of QFT. In section~\ref{sec:asymp}, we consider the deformed expectation value of
conserved charges in the finite volume using asymptotic Bethe Ansatz. We confirm that the
factorisation formula still holds in the finite volume. Finally we conclude in
section~\ref{sec:discuss} and discuss future interesting directions to explore.

\section{$T\bar T$-deformation of QFT}

\label{sec:TTbar}
In this section we provide a short review of the solvable irrelevant deformations of QFT that are related to the discussion in this work; the reader is invited to consult the original papers for further details.
\paragraph{Definitions} We give the definition of general solvable deformation in Lagrangian
formalism following  {\cite{LeFloch:2019rut,Bonelli-stb-ttbar-classical}}. For a given 2d QFT
described by an action $S_0$, 
consider two conserved currents $J^{(1)}_{\mu}$ and $J^{(2)}_{\mu}$ satisfying $\partial^{\mu}J^{(a)}_{\mu}=0$.
Using these currents we construct the following composite operator
\begin{align}
  \label{QFTeps}
\mathcal{O}\equiv \epsilon^{\mu\nu}J_{\mu}^{(1)}J^{(2)}_{\nu}.
\end{align}
Taking $J^{(1)}_{\mu}=T_{1\mu}$ and $J^{(2)}=T_{2\mu}$, the operator $\mathcal{O}$ is called the
$T\bar{T}$ operator. More precisely, we have
\begin{align}
\mathcal{O}=\epsilon^{\mu\nu}T_{1\mu}T_{2\nu}=T_{11}T_{22}-T_{12}T_{21}=\det T_{\mu\nu}
\end{align}
which can be written in the complex coordinates as \cite{Smirnov-Zam-TTbar}
\begin{align}
\det T_{\mu\nu}=-(T\bar{T}-\Theta^2)
\end{align}
where $T$, $\bar{T}$ and $\Theta$ are defined as
\begin{align}
T=&\,-\frac{1}{2}(T_{11}-T_{22}-2i T_{12}),\\\nonumber
\bar{T}=&\,-\frac{1}{2}(T_{11}-T_{22}+2i T_{12}),\\\nonumber
\Theta=&\,\frac{1}{2}(T_{11}+T_{22}).
\end{align}
For CFT, $\Theta=0$ and the operator takes the form of $T\bar{T}$.

Using the composite operator $\mathcal{O}$ we can define a family of theories parametrized by a
parameter $\lambda$ 
\begin{align}
\label{eq:defS}
\frac{d S_{\lambda}}{d\lambda}=\int d^2x\,\mathcal{O}_{\lambda}(x).
\end{align}
We stress that under this deformation the currents $J_{\mu}^{(a)}$ remain conserved, but their
explicit form changes depending on $\lambda$. Therefore the corresponding operator
$\mathcal{O}_{\lambda}$ is also deformed and depends on $\lambda$. 

\paragraph{Factorization formula} The solvability of the quantum theory for the class of theory (\ref{eq:defS}) is based on the factorization formula of the expectation value of the composite operator. To derive it, consider the theory on a cylinder where the spacial direction is a circle of length $L$, while the temporal direction is non-compact. For a generic eigenstate of the Hamiltonian denoted by $|\psi\rangle$, translational invariance and conservation of the current implies
\begin{align}
  \label{eq:QFTfact}
\langle\psi|\mathcal{O}_{\lambda}|\psi\rangle=\epsilon^{\mu\nu}\langle\psi|J_{\mu}^{(1)}J_{\nu}^{(2)}|\psi\rangle
=\epsilon^{\mu\nu}\langle\psi|J_{\mu}^{(1)}|\psi\rangle\langle\psi|J_{\nu}^{(2)}|\psi\rangle.
\end{align}
As a result, the expectation value of the composite operator $\mathcal{O}_{\lambda}$ can be written
in terms of the expectation values of the currents along the whole flow. The factorization formula
was first derived by Zamolodchikov for $\mathcal{O}_{\lambda}$ being the $T\bar{T}$ operator \cite{Zam-TTbar}.

\paragraph{Deformed spectrum} The factorization formula leads to a flow equation for the deformed energy. Denoting the deformed energy by $\langle\psi|H_{\lambda}|\psi\rangle=E(\lambda,L)$, and using the definition (\ref{eq:defS}) and the Hellmann-Feynman theorem yields
\begin{align}
\label{eq:defE}
\frac{\partial}{\partial\lambda}E=L\,\langle\psi|\mathcal{O}_{\lambda}|\psi\rangle
=L\,\epsilon^{\mu\nu}\langle\psi|J_{\mu}^{(1)}|\psi\rangle\langle\psi|J_{\nu}^{(2)}|\psi\rangle.
\end{align}
Given some convenient expressions for the expectation values of the conserved currents, the above equation can be used to determine the deformed spectrum. For example, taking $\mathcal{O}_{\lambda}$ to be the $T\bar{T}$ operator, the right hand side of (\ref{eq:defE}) can be written in terms of the energy and momentum, so the flow equation reads
\begin{align}
\label{eq:burger}
\frac{\partial E}{\partial\lambda}=\frac{1}{2}\left(E\partial_L E+\frac{P^2}{L}\right),
\end{align}
where $P$ is the momentum of the state $|\psi\rangle$. The equation (\ref{eq:burger}) is the inviscid Burger's equation in one dimension which can be solved by method of characteristics. When the original theory is a CFT, the deformed energy $E(\lambda,L)$ can be found explicitly.

\paragraph{Deformed $S$-matrix and CDD factor} The $S$-matrix of the QFT is deformed in a simple way
under the solvable irrelevant deformation \cite{Smirnov-Zam-TTbar} {(see also
  \cite{mussardo-ttbar})}. This fact is particularly powerful for IQFT since the factorized
$S$-matrix plays an essential role in computing many physical quantities. Let us denote the
factorized $S$-matrix by $S_{ij}^{kl}(\theta)$ where $\theta\equiv\theta_i-\theta_j$ and $\theta_i$
are the rapidities of the particles. Under $T\bar{T}$ deformation the $S$-matrix is deformed as 
\begin{align}
\label{eq:defSQFT}
S_{ij}^{kl}(\theta)\mapsto S_{ij}^{kl}(\theta) e^{i\delta_{ij}^{(\lambda)}(\theta_i,\theta_j)},
\end{align}
where
\begin{align}
\delta_{ij}^{(\lambda)}(\theta_i,\theta_j)=\lambda\,\epsilon_{\mu\nu}p_i^{\mu}p_j^{\nu}
\end{align}
and  $p_i^{\mu}=(E_i,P_i)$ is the 2-momentum of the particle. For a massive relativistic QFT
\begin{align}
p_i^{\mu}=(m_i\cosh\theta_i,m_i\sinh\theta_i)
\end{align}
and the corresponding phase factor takes the form
\begin{align}
\delta_{ij}^{(\lambda)}(\theta)=\lambda m_i m_j\sinh(\theta).
\end{align}
For massless particles
\begin{align}
\delta_{ij}^{(\lambda)}(\theta)=\frac{\lambda}{2}M_iM_je^{\theta}=-2\lambda\, p_i^{(+)}p_j^{(-)},
\end{align}
where $p_i^{(+)}$ and $p_j^{(-)}$ are the momenta of left-moving and right-moving massless particles. Analogous phase factors for the irrelevant deformation triggered by bi-locals of higher KdV charges are given in \cite{Smirnov-Zam-TTbar}, while the phase factor corresponding to $J T_a$ deformation was found recently in \cite{Anous:2019osb}.

In the following we demonstrate that all these features can be generalised in a natural way to long
range deformations of spin chains of bi-local type. 

\section{Local spin chains}
\label{sec:localchain}
We consider integrable local spin chains given by Hamiltonian
\begin{equation}
  H=\sum_{x=1}^L h(x),
\end{equation}
where $h(x)$ is a local operator acting on the nearest neighbor sites $x$ and $x+1$.

As a concrete example we consider integrable spin chains with $SU(N)$ symmetry, where at each site the local
space is $\complex^N$ and
\begin{equation}
  h(x)=P_{x,x+1}-1,
\end{equation}
where $P$ is the permutation operator exchanging the local spaces on sites $x$ and $x+1$; for $N=2$
this corresponds to the XXX Heisenberg spin chain. However, we stress that the discussions below are
quite general and work for other important 
classes of integrable spin chains such as the XXZ and XYZ chains.

Integrable spin chains have a family of infinitely many conserved charges $Q_\alpha$ in
involution. They are extensive operators and can be written as
\begin{equation}
  Q_\alpha=\sum_x q_\alpha(x),\quad \alpha=1,\dots,\infty,
\end{equation}
where $q_\alpha(x)$ are the charge densities. In the following we use the notation $|\ordo(x)|$ for
the range of the local operator 
$\ordo(x)$. The charge densities can be chosen such that $|q_\alpha(x)|=\alpha$, and specifically we
have $H\sim Q_2$, where the proportionality factor depends on conventions.

{The local charges are usually obtained from a commuting set of transfer matrices, which are built from local Lax
operators \cite{Korepin-Book}. Alternatively, they can also be obtained
with help of the boost operator \cite{Tetelman,sogo-wadati-boost-xxz,Thacker-boost,GM-higher-conserved-XXZ},
which is the approach we use below.

For an extensive local operator $L=\sum_{x=-\infty}^\infty l(x)$ we define the corresponding boost
operator as the formal sum 
\begin{equation}
  \mB[L]=\sum_{x=-\infty}^\infty x l(x).
\end{equation}
Then we have  \cite{Tetelman,sogo-wadati-boost-xxz,Thacker-boost,GM-higher-conserved-XXZ}
\begin{equation}
  \label{boostrel}
  Q_{\alpha+1}=i[\mB[Q_2],Q_\alpha]+\text{constant.}
\end{equation}
The constant part can be chosen in various ways; one possibility is to require that the charges have
zero eigenvalues on a specifically chosen ferromagnetic reference state.

The current operators associated to the charges are defined through the continuity equation
\begin{equation}
  \label{qqx}
 \frac{d}{dt} q_\alpha(x)=
  i  \left[H,q_\alpha(x)\right]=J_\alpha(x)-J_\alpha(x+1),
\end{equation}
where the choice for the shift in $x$ is just a matter of convention.

The essential difference between the lattice and the QFT is that space derivatives are discrete and
there is no Lorentz invariance here; in particular, the pair $(q_\alpha(x),J_\alpha(x))$
does not form a vector. Nevertheless, as explained below they play a very similar role in the
deformations as their QFT counterparts $J_\mu$, which are Lorentzian 2-vectors.

We also introduce the generalised currents, that describe the flow of charge $Q_\alpha$ under the time
evolution dictated by $Q_\beta$:
\begin{equation}
  \label{Jab}
i  \left[Q_\beta,  q_\alpha(x)\right]=J_{\alpha,\beta}(x)-J_{\alpha,\beta}(x+1).
\end{equation}
These operators also  play an important role in the long range deformations. The analogous
construction in QFT would correspond to using the higher conserved charges to generate Hamiltonian
time evolution of the system. Such quantum integrable hierarchies exist e.g. for the quantum KdV
system \cite{QuantumKdV} and the quantum Benjamin-Ono equation \cite{QuantumBenjaminOno}.

\subsection{Factorization on the lattice}

Now we construct a certain combination of the current and charge density operators, such that the
mean value of the resulting local operator factorises. We adopt the arguments of
\cite{Cardy-TTbar1,Zam-TTbar} to the lattice case. Let us fix the indices $\alpha\ne\beta$ and consider the operator
\begin{equation}
 J_\alpha(x)q_{\beta}(0)-q_\alpha(x)J_{\beta}(1).
\end{equation}
Setting $x=0$ or $x=1$ we can recognise the lattice version of the anti-symmetric combination
\eqref{QFTeps} from QFT. 

Let us also fix an arbitrary eigenstate $\ket{\Psi}$ of the model; the computations can be performed both in
finite or in infinite volume.

\begin{thm}
  \label{Cx}
The function
\begin{equation}
  C(x)=\bra{\Psi}  J_\alpha(x)q_{\beta}(0)-q_\alpha(x)J_{\beta}(1) \ket{\Psi}
\end{equation}
does not depend on $x$.
\end{thm}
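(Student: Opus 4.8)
The plan is to show that the discrete $x$-derivative of $C(x)$ vanishes, i.e. $C(x+1) - C(x) = 0$ for all $x$. The natural tool is the continuity equation \eqref{qqx}, which tells us that the differences $J_\alpha(x) - J_\alpha(x+1)$ and $J_\beta(x) - J_\beta(x+1)$ are themselves total time-derivatives: $i[H, q_\alpha(x)]$ and $i[H, q_\beta(x)]$ respectively. So I would compute
\begin{equation}
C(x) - C(x+1) = \bra{\Psi}\big(J_\alpha(x) - J_\alpha(x+1)\big)q_\beta(0)\ket{\Psi} - \bra{\Psi} q_\alpha(x)\big(J_\beta(1) - J_\beta(1)\big)\ket{\Psi},
\end{equation}
but the second term drops out entirely since $J_\beta(1)$ does not carry the running index $x$ — only the first term survives. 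Hence $C(x) - C(x+1) = \bra{\Psi} i[H, q_\alpha(x)]\, q_\beta(0)\ket{\Psi}$.

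Next I would use the fact that $\ket{\Psi}$ is an eigenstate of $H$, so $\bra{\Psi} [H, \mathcal{O}] \ket{\Psi} = 0$ for any operator $\mathcal{O}$. Applying this to $\mathcal{O} = q_\alpha(x) q_\beta(0)$ gives
\begin{equation}
0 = \bra{\Psi} i[H, q_\alpha(x) q_\beta(0)] \ket{\Psi} = \bra{\Psi} i[H, q_\alpha(x)]\, q_\beta(0) \ket{\Psi} + \bra{\Psi} q_\alpha(x)\, i[H, q_\beta(0)] \ket{\Psi}.
\end{equation}
The first term on the right is exactly $C(x) - C(x+1)$. For the second term, I would invoke the continuity equation again at site $0$: $i[H, q_\beta(0)] = J_\beta(0) - J_\beta(1)$. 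Then $\bra{\Psi} q_\alpha(x)\, i[H, q_\beta(0)]\ket{\Psi} = \bra{\Psi} q_\alpha(x) J_\beta(0)\ket{\Psi} - \bra{\Psi} q_\alpha(x) J_\beta(1)\ket{\Psi}$. The catch is that this is not manifestly zero; it is a difference of two terms that look structurally like pieces of $C$ but with $J_\beta$ at sites $0$ and $1$ rather than the fixed site $1$.

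The main obstacle, then, is handling this leftover term. The resolution should come from translation invariance: $\ket{\Psi}$ is also (up to a phase) an eigenstate of the translation operator, so expectation values $\bra{\Psi} q_\alpha(x) J_\beta(y) \ket{\Psi}$ depend only on the separation $x - y$. Thus $\bra{\Psi} q_\alpha(x) J_\beta(0)\ket{\Psi} = \bra{\Psi} q_\alpha(x+1) J_\beta(1)\ket{\Psi}$, which converts the leftover into $\bra{\Psi} q_\alpha(x+1) J_\beta(1)\ket{\Psi} - \bra{\Psi} q_\alpha(x) J_\beta(1)\ket{\Psi}$ — precisely (minus) the discrete derivative of the second term in $C$. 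Feeding everything back, the identity $C(x) - C(x+1) = -\big(\text{second term of } C(x+1) - \text{second term of } C(x)\big)$ rearranges to say that the discrete derivative of the \emph{first} term of $C$ equals the discrete derivative of the \emph{second} term, i.e. $C(x+1) = C(x)$. One has to be a little careful about whether the sum defining $H$ is over a finite periodic chain or the infinite line — in finite volume the cyclicity of the trace-like manipulation is automatic, while in infinite volume one needs the operators to be local enough that all commutators are finite-range, which they are. I would also double-check the index convention $\alpha \ne \beta$ is not actually needed for this particular statement (it is needed later for the factorization formula, but $C(x)$ being $x$-independent should hold regardless).
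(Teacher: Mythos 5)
Your proposal is correct and uses exactly the same ingredients as the paper's proof: the continuity equation, the vanishing of $\bra{\Psi}[H,q_\alpha(x)q_\beta(0)]\ket{\Psi}$ on an eigenstate, and translation invariance to trade $\bra{\Psi}q_\alpha(x)J_\beta(0)\ket{\Psi}$ for $\bra{\Psi}q_\alpha(x+1)J_\beta(1)\ket{\Psi}$. The one blemish is the intermediate claim that the second term of $C$ ``drops out'' of the discrete derivative because $J_\beta(1)$ carries no $x$ --- it does not drop out, since $q_\alpha(x)$ in that term does carry the running index, so your line $C(x)-C(x+1)=\bra{\Psi}i[H,q_\alpha(x)]q_\beta(0)\ket{\Psi}$ is really only the first term's contribution; your closing paragraph silently repairs this by recombining that piece with the discrete derivative of the second term, after which the argument closes correctly and coincides with the paper's.
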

\begin{proof}
Applying a lattice derivative and using the translational invariance of the correlator yields
\begin{equation}
  C(x+1)-C(x)
  =\bra{\Psi}  \Big(J_\alpha(x+1)-J_\alpha(x)\Big)q_{\beta}(0)\ket{\Psi}+
\bra{\Psi}q_\alpha(x)\Big(J_{\beta}(1)-J_\beta(0)\Big) \ket{\Psi}.
\end{equation}
From the definition \eqref{qqx} we get
\begin{equation}
  \begin{split}
  C(x+1)-C(x)&=-i\left[
\bra{\Psi} [H,q_\alpha(x)]q_{\beta}(0)\ket{\Psi}+
\bra{\Psi}q_\alpha(x)[H,q_\beta(0)] \ket{\Psi}
\right]=\\
&=-i\bra{\Psi}
[H,q_\alpha(x)q_\beta(0)]
\ket{\Psi}=0.    \\
  \end{split}
\end{equation}
where in the last step we used the fact that $\ket{\Psi}$ is an eigenvector of $H$.
\end{proof}

Denoting the constant value of $C(x)$ by $C$, using translational invariance and the fact
$\ket{\Psi}$ is an eigenvector of the conserved charges we get
\begin{equation}
LC=  \sum_{x=1}^L C(x)=\psivev{J_\alpha(x)}\psivev{Q_\beta}-\psivev{Q_\alpha}\psivev{J_\beta(x)}.
\end{equation}
Here  the current mean values don't depend on $x$;
we just kept the dependence on $x$ to signal that $J_{\alpha}$ and $J_\beta$ are intensive quantities, while the $Q$'s are extensive.

Dividing by $L$ and using the charge densities again we get
\begin{equation}
  \label{eq:latticefact}
\psivev{J_\alpha(x)q_{\beta}(0)-q_\alpha(x)J_{\beta}(1) }=
  \psivev{J_\alpha}\psivev{q_\beta}-\psivev{q_\alpha}\psivev{J_\beta},
\end{equation}
where now we deleted the $x$-dependence on the r.h.s. As in the field theory case
\cite{Zam-TTbar,Cardy-TTbar1}, the 
derivation only depends on the conservation equation \eqref{qqx} and translational invariance.

This argument can be extended to the generalised currents defined in \eqref{Jab}. Introducing the three index local
operators
\begin{equation}
  \label{Kabc}
  K_{\alpha,\beta,\gamma}(x)=J_{\alpha,\gamma}(x)q_{\beta}(x)-q_\alpha(x)J_{\beta,\gamma}(x+1)
\end{equation}
we obtain:
\begin{thm}
The mean values of $K_{\alpha,\beta,\gamma}(x)$ factorize as
\begin{equation}
  \label{Jabfact}
\psivev{ K_{\alpha,\beta,\gamma}(x)}=
  \psivev{J_{\alpha,\gamma}}\psivev{q_\beta}-\psivev{q_\alpha}\psivev{J_{\beta,\gamma}}.
\end{equation}
\end{thm}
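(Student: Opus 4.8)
The plan is to mimic the proof of Theorem~\ref{Cx} almost verbatim, replacing the Hamiltonian $H$ by the charge $Q_\gamma$ that generates the flow, and the ordinary currents $J_\alpha, J_\beta$ by the generalised currents $J_{\alpha,\gamma}, J_{\beta,\gamma}$. First I would define the auxiliary function
\begin{equation*}
  \widetilde C(x)=\psivev{J_{\alpha,\gamma}(x)q_\beta(0)-q_\alpha(x)J_{\beta,\gamma}(1)}
\end{equation*}
and show it is independent of $x$ by applying a lattice derivative. Using translational invariance of the correlator to shift the argument of the second $J_{\beta,\gamma}$ term, and then invoking the defining relation \eqref{Jab} for the generalised currents, the telescoping differences $J_{\alpha,\gamma}(x+1)-J_{\alpha,\gamma}(x)$ and $J_{\beta,\gamma}(1)-J_{\beta,\gamma}(0)$ turn into $-i[Q_\gamma,q_\alpha(x)]$ and $-i[Q_\gamma,q_\beta(0)]$ respectively. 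Combining the two commutator terms into $-i[Q_\gamma,q_\alpha(x)q_\beta(0)]$ and using that $\ket{\Psi}$ is an eigenvector of the conserved charge $Q_\gamma$ (hence the expectation value of any commutator with $Q_\gamma$ vanishes) gives $\widetilde C(x+1)-\widetilde C(x)=0$.

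Next, having established that $\widetilde C(x)\equiv\widetilde C$ is constant, I would sum over $x$ from $1$ to $L$ exactly as in the paragraph following Theorem~\ref{Cx}. The sum $\sum_x q_\beta(0)$ is not quite what appears — one must be a little careful that in $K_{\alpha,\beta,\gamma}(x)$ the charge density sits at site $x$, not at a fixed site; but by translational invariance $\psivev{J_{\alpha,\gamma}(x)q_\beta(x)}=\psivev{J_{\alpha,\gamma}(0)q_\beta(0)}$, so the constant value $\widetilde C$ computed with $q_\beta$ at the origin equals $\psivev{K_{\alpha,\beta,\gamma}(x)}$ for every $x$. Summing the defining expression for $\widetilde C$ over a full period and using that $\ket{\Psi}$ is an eigenvector of both $Q_\beta$ and $Q_\alpha$ yields
\begin{equation*}
  L\,\widetilde C=\psivev{J_{\alpha,\gamma}}\psivev{Q_\beta}-\psivev{Q_\alpha}\psivev{J_{\beta,\gamma}},
\end{equation*}
and dividing by $L$ and re-expressing the extensive charges through their densities produces \eqref{Jabfact}. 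In infinite volume the same conclusion follows directly from $x$-independence together with clustering, without needing to divide by $L$.

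The only genuine subtlety — and the step I would watch most carefully — is the bookkeeping of which site the operators act on in the definition \eqref{Kabc} versus the auxiliary two-point function: in Theorem~\ref{Cx} the charge density $q_\beta$ is anchored at site $0$ and the current $J_\beta$ at site $1$, whereas $K_{\alpha,\beta,\gamma}(x)$ has everything riding along at site $x$ (and $x+1$). Reconciling the two requires one application of translational invariance and the observation that the combination $J_{\alpha,\gamma}(x)q_\beta(x)-q_\alpha(x)J_{\beta,\gamma}(x+1)$ is precisely the $x$-translate of $J_{\alpha,\gamma}(0)q_\beta(0)-q_\alpha(0)J_{\beta,\gamma}(1)$, which is the $\widetilde C(0)$ expression up to the already-proven $x$-independence. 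Everything else is a routine repetition of the previous argument; no new idea is needed, since the proof of Theorem~\ref{Cx} never used any special property of $H$ beyond it being a conserved charge for which $\ket{\Psi}$ is an eigenvector, and $Q_\gamma$ shares exactly that property.
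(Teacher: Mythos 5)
Your proposal is correct and follows exactly the route the paper intends: the paper gives no separate proof of this theorem, stating only that ``this argument can be extended to the generalised currents,'' and your write-up is precisely that extension, with $H$ replaced by $Q_\gamma$, the continuity equation \eqref{qqx} replaced by \eqref{Jab}, and the eigenvector property invoked for $Q_\gamma$, $Q_\alpha$ and $Q_\beta$. The bookkeeping point you flag (reconciling the anchored-at-the-origin two-point function with the definition \eqref{Kabc} via translational invariance) is handled correctly, so nothing is missing.
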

Note that \eqref{eq:latticefact} and \eqref{Jabfact} are precisely the lattice
counterparts of the factorization formula in QFT \eqref{eq:QFTfact}. To our knowledge this result is new. We give a few more comments about the factorisation in Section \ref{sec:discuss}.

Now we show that this particular combination of operators arises from a simple commutation relation,
which is analogous to the boost relation \eqref{boostrel}. Let us pick two indices $\alpha\ne \beta$
and define the formal sum 
\begin{equation}
  X=\sum_{x<y} q_\alpha(x)q_\beta(y).
\end{equation}

\begin{thm}
  \label{XQgthm}
The following commutator generates the factorising local operators:
  \begin{equation}
    \label{XQg}
     i[X,Q_\gamma] =
    \sum_x K_{\alpha,\beta,\gamma}(x).
     \end{equation}
 \end{thm}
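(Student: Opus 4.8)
The plan is to reduce the statement to the continuity equation \eqref{Jab} together with a pair of telescoping sums; no analyticity or eigenstate input is needed, so unlike Theorem \ref{Cx} this is a purely algebraic operator identity. First I would expand the commutator with the Leibniz rule,
\[
  i[X,Q_\gamma]=i\sum_{x<y}\Big([q_\alpha(x),Q_\gamma]\,q_\beta(y)+q_\alpha(x)\,[q_\beta(y),Q_\gamma]\Big),
\]
which is exact because each of $[q_\alpha(x),Q_\gamma]$ and $[q_\beta(y),Q_\gamma]$ is a genuine (quasi-)local operator. By \eqref{Jab} one has $i[q_\alpha(x),Q_\gamma]=J_{\alpha,\gamma}(x+1)-J_{\alpha,\gamma}(x)$, and likewise with $\alpha$ replaced by $\beta$, so the right-hand side becomes a sum of two double sums, each built from a lattice derivative of a generalised current multiplied by a charge density.

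The second step is to carry out the two telescoping sums. In the first double sum I would fix $y$ and sum $J_{\alpha,\gamma}(x+1)-J_{\alpha,\gamma}(x)$ over all $x<y$; the series collapses to $J_{\alpha,\gamma}(y)$, up to a boundary contribution from $x\to-\infty$. In the second I would fix $x$ and sum $J_{\beta,\gamma}(y+1)-J_{\beta,\gamma}(y)$ over all $y>x$; this collapses to $-J_{\beta,\gamma}(x+1)$, up to a boundary term from $y\to+\infty$. Collecting the two surviving pieces reproduces
\[
  \sum_x\Big(J_{\alpha,\gamma}(x)\,q_\beta(x)-q_\alpha(x)\,J_{\beta,\gamma}(x+1)\Big)=\sum_x K_{\alpha,\beta,\gamma}(x),
\]
which is exactly \eqref{XQg}.

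The one point that deserves care is the treatment of those boundary terms: since $X$ is a formal semi-infinite double sum, the collapsed series involve the generalised currents at $x=\pm\infty$, and these are dropped. This is consistent with the normalisation inherited from the constant in \eqref{boostrel} — the charges, hence the currents, are fixed so as to vanish on the chosen ferromagnetic reference configuration, so the currents decay to zero far from any excitation. I would also note that on a finite periodic chain the operator $X$ is not translation invariant and the identity acquires a localised boundary defect, a familiar feature of bi-local generators of long-range deformations; the argument above is the clean infinite-volume (equivalently, formal) version, and it parallels step-for-step the manipulation used in the proof of Theorem \ref{Cx}, with $H$ replaced by $Q_\gamma$ and ordinary currents replaced by the generalised ones.
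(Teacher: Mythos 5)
Your proposal is correct and follows essentially the same route as the paper: Leibniz expansion of the commutator followed by telescoping the summed continuity equation \eqref{Jab}, with the currents at spatial infinity dropped. The only difference is cosmetic --- the paper packages the telescoping as ``integrating'' \eqref{Jab} into the partial-sum identities \eqref{Jab2}, whereas you carry it out term by term and make the discarded boundary contributions explicit.
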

\begin{proof}
The commutator can be computed as
\begin{equation}
  \sum_{x<y} \left\{ iq_\alpha(x) [q_\beta(y),Q_\gamma]+
    i[q_\alpha(x),Q_\gamma] q_\beta(y)
\right\}
\end{equation}
which can be rearranged as
\begin{equation}
  \label{opl}
\sum_x  iq_\alpha(x) \left[\sum_{y>x} q_\beta(y),Q_\gamma\right]+
\sum_y  i\left[\sum_{x<y} q_\alpha(x) ,Q_\gamma\right] q_\beta(y).
\end{equation}
Integrating \eqref{Jab} yields
\begin{equation}
  \begin{split}
  \label{Jab2}
i  \left[Q_\beta,  \sum_{y>x} q_\alpha(y)\right]&=J_{\alpha,\beta}(x+1)\\
i  \left[Q_\beta,  \sum_{y<x} q_\alpha(y)\right]&=-J_{\alpha,\beta}(x).
\end{split}
\end{equation}
and substituting these into \eqref{opl} after the appropriate changes yields
\begin{equation}
 -  \sum_x  q_\alpha(x)J_{\beta,\gamma}(x+1)
  +
  \sum_y  J_{\alpha,\gamma}(y) q_\beta(y),
\end{equation}
which proves the theorem.
\end{proof}

Just as in the QFT case, the structural form of $K_{\alpha,\beta,\gamma}$ is given by a determinant
\begin{equation}
  K_{\alpha,\beta,\gamma}(x)=
  \begin{vmatrix}
    J_{\alpha,\gamma}(x) &  q_\alpha(x) \\
J_{\beta,\gamma}(x+1) &    q_\beta(x)
  \end{vmatrix}
\end{equation}
with a special choice for the operator ordering. The case of the stress-energy tensor
(and thus $\det T$) encountered in QFT is obtained by formally setting $Q_\alpha=Q_\gamma=H$ and
$Q_\beta=P$, where $P$ is the total momentum. However, it is not possible to generate the precise
analogue of $T$ on the lattice, because there is no local operator corresponding to the momentum
density. This is also consistent with the fact that Lorentz invariance is lost on the lattice.

On the other hand, one of the advantages of working on the lattice is that there is no need to regularise the product of
operators e.g. by point-splitting. Nevertheless, the prescription is not entirely trivial as it involves a shift in the $x$ coordinate whose precise form
 depends on the definition of the currents given in \eqref{Jab}.

We also note that simple re-definitions of the generating operator $X$ lead essentially to the same result. For example we could have taken
\begin{equation}
  \label{Xl}
 X=\sum_{x<y+l} q_\alpha(x)q_\beta(y)
\end{equation}
with some $l\in \egesz$. This would give
\begin{equation}
  i[X,Q_\gamma]=\sum_x  \left[ J_{\alpha,\gamma}(x+l) q_\beta(x)   -q_\alpha(x+l)J_{\beta,\gamma}(x+1)
\right].
  \end{equation}
Theorem \ref{Cx} states that the mean values of this operator do not depend on $l$, therefore $l$
could be chosen arbitrary. In particular  $l$ can be chosen large enough so that the supports of the
two operators in \eqref{Xl} do not 
overlap in space, for which case there is no issue with the operator ordering.

\section{Long range spin chains}
\label{sec:longchain}
In this section we explain how to deform the local spin chains with the factorizing operators
described above. The framework for this was developed in \cite{beisert-long-range-1,beisert-long-range-2},
although  the relation of  the perturbing operators to the generalised currents was not realised back then.
We restrict ourselves to the infinite volume situation in what follows, except in Section  \ref{sec:asymp} where we return to finite volumes.

We introduce a deformation parameter $\kappa$ and set out to find the deformations the commuting set of charges
which satisfy the following conditions:
\begin{enumerate}
\item The deformed charges allow a power series expansion in $\kappa$
\begin{equation}
  Q_\alpha^\kappa=\sum_{j=0}^\infty \frac{\kappa^j}{j!}  Q_\alpha^{(j)}
\end{equation}
with the initial conditions
\begin{equation}
  Q_\alpha^{\kappa=0}=Q_\alpha,
\end{equation}
where $Q_\alpha$ are the local charges of the original (local) spin chain.
\item The charges continue to form a commuting family
\begin{equation}
  [Q_\alpha^\kappa,Q_\beta^\kappa]=0,
\end{equation}
which ensures that integrability is preserved by the deformation.
\item We also require that the resulting operators remain extensive
  and quasi-local\footnote{
     Following
\cite{prosen-xxx-quasi,prosen-enej-quasi-local-review} we call an
extensive operator $A=\sum_x a(x)$
quasi-local, if the Hilbert-Schmidt (HS) norm of its traceless part grows
at most linearly with the volume, and if its HS overlap with any local
operator is finite in the $L\to\infty$ limit.
way.
Quasi-local operators can include pieces with arbitrary long
range, but the amplitudes of these terms typically decay
exponentially with the range.
}, written as
\begin{equation}
  \label{Qxdef2}
  Q_\alpha^\kappa=\sum_{x=-\infty}^\infty q^\kappa_\alpha(x).
\end{equation}
in terms of appropriate charge densities $q^\kappa_\alpha(x)$.
\item Furthermore the deformation of the infinite volume eigenstates can be expressed as a power series
\begin{equation}
  \ket{\Psi^\kappa}=\sum_{j=0}^\infty \frac{\kappa^j}{j!}  \ket{\Psi^{(j)}},
\end{equation}
where $\ket{\Psi^{(0)}}$ are the original eigenstates.
\end{enumerate}

These requirements can easily be satisfied by postulating the following formal generating equations
\cite{beisert-long-range-1,beisert-long-range-2}:
\begin{equation}
  \begin{split}
  \label{defdef}
  \frac{d}{d\kappa} \ket{\Psi^\kappa}&=-iX(\kappa) \ket{\Psi^\kappa}   \\
  \frac{d}{d\kappa} Q_\alpha^\kappa&= i[X(\kappa),Q_\alpha^\kappa],\\
\end{split}
\end{equation}
where $X(\kappa)$ is a formal operator to be specified later, typically dependent on $\kappa$.

The generating equation naturally preserves the commutation of the charges due to
\begin{equation}
  \frac{d}{d\kappa} [Q_\alpha^\kappa,Q_\beta^\kappa]=
  i[X(\kappa), [Q_\alpha^\kappa,Q_\beta^\kappa]],
\end{equation}
where the Jacobi identity was exploited, and the eigenvalues of the charges are also unchanged:
\begin{equation}
  Q_\alpha^\kappa\ket{\Psi^\kappa}=\Lambda_\alpha^\Psi \ket{\Psi^\kappa},
\end{equation}
where $\Lambda_\alpha^\Psi$ are the eigenvalues corresponding to the state $\ket{\Psi}$ in the original model.

These simple consequences of the generating equation do not depend on the form of $X(\kappa)$, which
is instead constrained by the 
physical requirement that the deformed charges should remain quasi-local.

In \cite{beisert-long-range-2} three families of deformations satisfying this requirement were
identified\footnote{The quasi-locality of the deformed charges was not
proven there, but it is clearly satisfied at least in some neighborhood
of $\kappa=0$.}:
\begin{enumerate}
\item {\bf Local operators.} Take
  \begin{equation}
    X=\sum_{x=-\infty}^\infty \ordo(x)
  \end{equation}
  with $\ordo$ being any short range operator. This deformation describes a ``physical'' similarity
  transformation corresponding to a change of
  basis:
\begin{equation}
  Q_\alpha^\kappa=e^{i\kappa X} Q^0_\alpha e^{-i\kappa X },
\end{equation}
and can be extended immediately to the case when $\ordo(x)$ is quasi-local.

\item {\bf Boost operators.} The choice
  \begin{equation}
    \label{Xboost}
    X(\kappa)=-\mB[Q_\alpha^\kappa]
  \end{equation}
  for some $\alpha$ also generates quasi-local charges. This deformation was treated in detail in
  \cite{sajat-longcurrents}, where it was shown that the $\kappa$-derivative of the charges is
  \begin{equation}
    \label{qabjabk}
    \dperdk Q_\beta^\kappa= \sum_x J_{\alpha,\beta}^\kappa(x),
  \end{equation}
  where we defined the $\kappa$-deformed generalized current operators as
\begin{equation}
  \label{Jabkappa}
i  \left[Q_\beta^\kappa,  q_\alpha^\kappa(x)\right]=J^\kappa_{\alpha,\beta}(x)-J^\kappa_{\alpha,\beta}(x+1).
\end{equation}

\item {\bf Bi-local operators.} We choose the generator in the form
   \begin{equation}
     \label{Xkappadef}
  X(\kappa)=
\sum_{x<y} q_\alpha^\kappa(x)q_\beta^\kappa(y).
\end{equation}
Repeating the computations of the previous Section the $\kappa$-derivatives can be written as
\begin{equation}
  \dperdk Q_\gamma^\kappa=  \sum_x K^\kappa_{\alpha,\beta,\gamma}(x),
\end{equation}
where the deformed $K$-operators are
\begin{equation}
  \label{Kabcdef}
  K^\kappa_{\alpha,\beta,\gamma}(x)=
  J^\kappa_{\alpha,\gamma}(x) q^\kappa_\beta(x)
    -q^\kappa_\alpha(x)J^\kappa_{\beta,\gamma}(x+1).
\end{equation}
This gives a recursion relation, which together with \eqref{Xkappadef} can be used to generate the
deformed charges. 
\end{enumerate}

In all of these cases there is a large gauge freedom inherent in specifying these generators. For example, the
charge density operators are not unique, as any re-definition of the form
\begin{equation}
  \label{gauge}
  q_\alpha(x)\quad\to\quad q_\alpha(x)+D(x+1)-D(x).
\end{equation}
leaves the charges invariant. Furthermore, instead of \eqref{Xkappadef} we could have a shifted
version as in  \eqref{Xl}. 
However, it can be shown that these choices do not affect the main conclusions: they always result
in adding extensive local operators to $X$, and do not affect the finite volume mean values of the perturbing operators.

It is an important property that the factorization \eqref{Jabfact} holds even for the deformed
operators given by \eqref{Kabcdef}. This can be seen by repeating the steps of the derivation of
\eqref{Jabfact}, and noticing that it does not use the locality properties of the operators, only
the global commutation of the charges and the local continuity equations.

\subsection{Action on Bethe states}

In this subsection we consider the deformation of the eigenstates dictated by the generating equation \eqref{defdef}.
This sets the lattice case apart from QFT, where it is not clear how the eigenstates are deformed.
Most local integrable spin chains can be solved by Bethe Ansatz and the eigenstates can be written down
explicitly. For simplicity we restrict ourselves to systems with a simple (non-nested) Bethe Ansatz; the extension to nested
cases is rather straightforward (c.f. \cite{beisert-long-range-2,sajat-longcurrents}). Furthermore, we do not specify the details of
the model; concrete formulas pertaining to the XXX and XXZ models can be found for example in
\cite{sajat-currents,sajat-longcurrents}.

The infinite volume (un-normalized) Bethe states with $N$ particles can be written as
\begin{equation}
  \label{psidef}
  \ket{\lan}=\sum_{x_1<x_2<\dots < x_N}
  \sum_{\sigma\in S_N}  \prod_{j>k} f(\lambda_{\sigma_j}-\lambda_{\sigma_k})
\prod_{j=1}^N e^{ip_{\sigma j}x_j}  \ket{x_1,\dots,x_N},
\end{equation}
where $\ket{x_1,\dots,x_N}$ are the basis states with the particles occupying positions $x_j$. We
have  $p_j=p(\lambda_j)$, where $p(\lambda)$ is the one-particle quasi-momentum
and $\lambda$ is the rapidity parameter. The summation  $\sigma\in S_N$ runs over all permutations of the rapidities. The function
$f(\lambda)$ describes the interaction between the particles with the scattering phase shift given by
\begin{equation}
  \label{Sdef}
  S(\lambda)=e^{i\delta(\lambda)}=\frac{f(\lambda)}{f(-\lambda)}.
  \end{equation}
In the above form the Bethe states are symmetric with respect to the exchange of rapidities.

The Bethe states are eigenvectors of the set of commuting charges with eigenvalues given by
\begin{equation}
  \label{Qmean}
  Q_\alpha\ket{\lan}=\Lambda_\alpha (\lan) \ket{\lan},\qquad
\Lambda_\alpha (\lan) =\sum_{j=1}^N h_\alpha(\lambda_j).
\end{equation}
The specific form of the one-particle eigenvalues is not relevant for us; explicit formulas can be found
 in \cite{sajat-currents,sajat-longcurrents}.

The total quasi-momentum of the state can be expressed as
\begin{equation}
\label{P0def}
  P=\sum_{j=1}^N p(\lambda_j).
\end{equation}

The deformation of the eigenstates can be understood simply from the generating equation \eqref{defdef}
\cite{beisert-long-range-2,sajat-longcurrents}. The Bethe states retain their functional form for
large separations of the particles, 
but the propagation factors and the relative phases change. Furthermore, the wave function acquires
additional correction 
terms for small separations. More precisely, for any finite $l$ contact terms appear for separations of $l$ sites in higher order terms $\sim
\kappa^c$ where $c\sim l$.

The boost-type deformations with charge $Q_\alpha$ generate a change of the one-particle momentum. It was shown in \cite{sajat-longcurrents}
that if this is the only ingredient in the deformation, then
\begin{equation}
  \label{pdef2}
  p^\kappa(\lambda)=p(\lambda)+\kappa h_\alpha(\lambda)
\end{equation}
holds to all orders in $\kappa$. The boost operators are one-particle irreducible, therefore they do not change the scattering matrix
of the particles. This property and the deformation of the momentum was used in \cite{sajat-longcurrents}  to derive the
finite volume mean values of the current operators; the results are summarized in Section \ref{sec:asymp} below.

The bi-local type deformations change the $S$-matrix of the spin chain \eqref{Sdef}, with the $\kappa$-derivative of
the scattering phase given by
\begin{equation}
  \label{deltadef}
  \dperdk \delta^\kappa(u,v)=h_\alpha(v)h_\beta(u)-h_\alpha(u)h_\beta(v).
\end{equation}
This equation can be understood intuitively by looking at the deformation of the states. The various
terms of the Bethe wave function get multiplied with different phases corresponding to the relative
positions of the particles, eventually resulting in the above anti-symmetric combination of the one-particle
charge eigenvalues. Note that the correction terms generally result in the $S$-matrix depending on the two rapidities separately instead of only on their difference, which in QFT corresponds to breaking Lorentz invariance.

The charge eigenvalues do not change under deformation, therefore the all-orders result is simply
\begin{equation}
  \label{deltadef2}
  \delta^\kappa(u,v)=\delta(u,v)+\kappa \left[h_\alpha(v)h_\beta(u)-h_\alpha(u)h_\beta(v)\right],
\end{equation}
and the deformation of $S$-matrix can be written is
\begin{align}
S(u,v)\mapsto S(u,v)e^{i\kappa(h_\alpha(v)h_\beta(u)-h_\alpha(u)h_\beta(v))},
\end{align}
which is precisely the lattice version of the QFT relation \eqref{eq:defSQFT}!

\section{Finite volume: asymptotic Bethe Ansatz}

\label{sec:asymp}

The long range deformations described in the previous Section are only defined in infinite volume. The reason for this is that
consistency requires the presence of correction terms with arbitrary long range which appear
at successively higher orders in $\kappa$. The problem in finite volume is that there is no general prescription
to define these long range terms once they wrap around the chain, which is the famous {\it wrapping problem}.
Nevertheless, the spectrum can be computed using the so-called {\it asymptotic Bethe Ansatz} up to exponentially
small corrections in the volume. The idea is to use the infinite volume quantities to construct the Bethe wave functions, and
set up the Bethe equations using this information. This procedure is justified up to some finite order in $\kappa$ such that the perturbing
operators still fit into the volume. Often we are only interested in the first order correction terms, and then the only requirement is that
the leading perturbing operator should fit into the volume \cite{sajat-longcurrents}.

We now collect the main asymptotic equations and derive the mean values of the perturbing
operators. Focusing on the bi-local case we also show that  the deformation of the scattering phase
\eqref{deltadef} is consistent with the factorisation \eqref{Jabfact}.

In finite volume the original Bethe equations are
\begin{equation}
  p_jL+\sum_{k\ne j}\delta(\lambda_j,\lambda_k)=2\pi I_j,
\end{equation}
where $I_j\in\egesz$ are the momentum quantum numbers.
The deformation is assumed to be continuous in $\kappa$, therefore the $I_j$ not be
changed. According to this the deformed asymptotic Bethe equations are
\begin{equation}
  p^\kappa_jL+\sum_{k\ne j}\delta^\kappa(\lambda_j,\lambda_k)=2\pi I_j,
  \label{eq:deformed_asymp_Bethe}
\end{equation}
where $p^\kappa$ and $\delta^\kappa$ are given by \eqref{pdef2} and \eqref{deltadef2}. Here we used
a notation reflecting that both quantities are changed at the same time. Whereas this is certainly possible by
combining the boost and bi-local types of generators, in the following we treat the two types of deformations
separately. We recall that the commutativity of the different types of deformations is discussed in
detail in \cite{beisert-long-range-2}. 

Within the asymptotic Bethe Ansatz the charge eigenvalues are computed as
\begin{equation}
  Q_\gamma^\kappa \ket{\lan}= \Lambda_\gamma^\kappa(L)
  \ket{\lan},\qquad
 \Lambda_\gamma^\kappa(L)    =\sum_{j=1}^N h_\gamma(\lambda_j).
\end{equation}
Therefore, the dependence of the charge eigenvalues on $\kappa$ comes entirely from the change of the
rapidities:
\begin{equation}
  \dperdk  \Lambda_\gamma^\kappa(L)=\sum_{j=1}^N h'_\gamma(\lambda_j) \frac{d\lambda_j}{d\kappa}.
\end{equation}
This idea was used in \cite{sajat-longcurrents} to derive the current mean values of the original
model, with the result
\begin{equation}
  \lanvev{J_{\alpha,\beta}}= {\bf h}'_\beta    G^{-1} {\bf h}_\alpha,
  \label{eq:meanvaluecurrent}
\end{equation}
where ${\bf h}_\alpha$ is a vector of length $N$ with components $h_\alpha(\lambda_j)$, ${\bf
  h}'_\beta$ is defined analogously, and $G$ is the Gaudin matrix of size $N\times N$ given by
\begin{equation}
  \label{gaudin2}
  G_{jk}=\frac{\partial}{\partial \lambda_k} \left(p_jL+\sum_{k\ne j}\delta(\lambda_j,\lambda_k)\right),\qquad
  j,k=1\dots N.
\end{equation}
Let us now also investigate the first order correction of  $\lanvev{Q_\gamma^\kappa}$ under the
bi-local transformation with $Q_\alpha$ and $Q_\beta$. Using perturbation theory this is given by
\begin{equation}
  \label{Qgchange}
  \dperdk \lanvev{Q_\gamma^\kappa}= L \lanvev{K_{\alpha,\beta,\gamma}(x)}.
\end{equation}
Taking the derivative of the asymptotic Bethe equations \eqref{eq:deformed_asymp_Bethe} with respect
to $\kappa$ yields 
\begin{equation}
  G_{jk} \frac{d\lambda_k}{d\kappa}+\sum_{k\ne j}\frac{d\delta(\lambda_j,\lambda_k)}{d\kappa}=0.
\end{equation}
From \eqref{deltadef}
\begin{equation}
  G_{jk} \frac{d\lambda_k}{d\kappa}=\sum_{k\ne j}\left[
  h_\alpha(\lambda_j)h_\beta(\lambda_k)-h_\alpha(\lambda_k)h_\beta(\lambda_j)\right].
 \end{equation}
The summation can be changed to include $j=k$, resulting in
\begin{equation}
  G_{jk} \frac{d\lambda_k}{d\kappa}=\left[
  h_\alpha(\lambda_j)\Lambda^\kappa_\beta-\Lambda^\kappa_\alpha h_\beta(\lambda_j)\right].
\end{equation}
which can be rewritten in vector form as
\begin{equation}
  \dperdk {\boldsymbol \lambda}=\Lambda^\kappa_\alpha  G^{-1} {\bf h}_\beta-
  \Lambda^\kappa_\beta  G^{-1} {\bf h}_\alpha.
\end{equation}
Substituting into  \eqref{Qgchange} results in
\begin{equation}
L\lanvev{K_{\alpha,\beta,\gamma}(x)}=
\Lambda^\kappa_\beta  \left(  {\bf h}'_\gamma    G^{-1} {\bf h}_\alpha\right)-
  \Lambda^\kappa_\alpha  \left( {\bf h}'_\gamma   G^{-1} {\bf h}_\beta\right).
\end{equation}
The eigenvalues divided by $L$ are nothing else but the mean values of the charge densities, while
the terms in parentheses are exactly the mean values of the currents given in
\eqref{eq:meanvaluecurrent}, so finally we obtain 
\begin{equation}
  \begin{split}
   &    \lanvev{K_{\alpha,\beta,\gamma}(x)}=\\
    &\hspace{1cm}=\lanvev{q_\beta(x)}\lanvev{J_{\alpha,\gamma}(x)}-
    \lanvev{q_\alpha(x)}\lanvev{J_{\beta,\gamma}(x)},
  \end{split}
  \end{equation}
which is exactly the factorisation condition \eqref{Jabfact} applied to the finite volume Bethe states.

Another approach is to start with the above factorisation equation (which was derived independently from Bethe
Ansatz), and then to compute the deformation rule \eqref{deltadef} by focusing on the two-particle case.
Alternatively, this result can also be considered as an independent derivation of the current mean values, starting
from the factorization \eqref{Jabfact} and the deformation \eqref{deltadef} of the scattering
phase. All these aspects can be 
summarised in the statement that the transformation rules are consistent with the algebra of the
charges and the currents. 

To conclude this Section let us return once more to the deformation rule \eqref{defdef}, which
  guarantees that the charge eigenvalues are invariant under the deformation {\it in infinite
    volume}. This is in contrast with the finite volume case, where the charge eigenvalues do change
  according to \eqref{Qgchange}. The difference is clearly due to the fact that the similarity
  transformation generated by  \eqref{defdef} is not  compatible with periodic boundary conditions.
  This phenomenon is completely analogous to what was
  observed in \cite{Cardy-TTbar1}, where it was explained that the $T\bar T$-deformation is eventually trivial
  {\it up to boundary terms}.

\section{Discussion}

\label{sec:discuss}

In this work we pointed out that the known bi-local type deformations of the integrable spin chains are
formally equivalent to the $T\bar T$-type deformations of integrable QFT. The key step in this
identification was the commutation
relation \eqref{XQg} of Theorem \ref{XQgthm}. We also found that the generalised current operators
play a central role, similar to the case of the
boost-type deformations treated in \cite{sajat-longcurrents}. Remarkably, the present
  framework yields the
  deformation of all conserved charges in a very straightforward way, which sets it
  apart from the known QFT computations which mostly focus on the Hamiltonian only.

A central result is the factorization of the mean values of the three index operators
$K_{\alpha,\beta,\gamma}(x)$, see eq. \eqref{Jabfact}.
The idea behind this is essentially the same as the original observations of Zamolodchikov regarding
the $T\bar T$-operator \cite{Zam-TTbar}.
Nevertheless, as far as we know, this form has not yet been written down for lattice systems.

Let us also mention, that for the XXZ spin chain the study of factorisation of correlation functions is of independent interest and
is by now well understood
\cite{bjmst-fact6,HGSIII,goehmann-kluemper-finite-chain-correlations,XXXfactorization}. In this
specific model the mean values of all local operators can be factorised, and our result \eqref{Jabfact} can be seen as a special case.
However, we also stress that our $K_{\alpha,\beta,\gamma}(x)$ are very specific operators that exhibit factorisation
for {\it every local integrable spin chain} and the factorisation also holds in the presence of the integrable long-range deformation.

Our observations open up many interesting future directions to explore.

\paragraph{Continuum limit}
It is well-known that the continuum limit and the low energy physics of the spin chains can in general be
described by a QFT. Therefore it is interesting to investigate the continuum limit of the deformed chains. While the bi-local deformation
for the spin chain is rather general and does not depend on specific details of the model, the continuum limit is more subtle and it does depend on
details regarding the states under consideration and the scaling to
the continuum limit.  All our perturbations are generally expected to
correspond to irrelevant operators, therefore their coupling constants
scale to zero under the Renormalization Group flow. Nevertheless it
might be possible to construct special scaling limits, such that the
resulting QFT's would correspond to the $T\bar{T}$ (or related)
deformations. This
connection
can potentially be 
exploited to learn about the short-distance physics and the ultraviolet completion of the $T\bar{T}$
deformations. 

\paragraph{Alternative formulations and relation to lattice gravity}
Even though the generating equation \eqref{defdef} provides all necessary details of the
deformation, we believe that the understanding of the long range spin chains is far from
complete. The present formulation is only tangentially related to the standard methods of lattice
integrability, such as local Lax operators, integrable vertex models, etc. Even though there are
certain special cases when a long range spin chain could indeed be related to the more conventional
models and methods
\cite{hubbard-and-long-range-1,didina-deform1,gromov-vieira-theta,yunfeng-didina-ivan}, the
generating equation \eqref{defdef}  opens up a big parameter space, which has not yet been
understood. It would be desirable to find an underlying integrable structure, which is rich enough
so that it could naturally accommodate all long  range deformations. Certain hints are provided by
QFT itself: in the Lagrangian formalism the  $T\bar T$ deformation turns out to be a pure boundary
effect \cite{Cardy-TTbar1}; furthermore, it can be understood by coupling the theory two a certain
2D gravity. It would be interesting to find  the natural lattice counterparts of these
phenomena. One possible lattice interpretation of the   $T\bar T$ deformation is found in the
upcoming work \cite{DJT}, where it is shown that the coupling to  2D gravity can be simulated by
allowing the inhomogeneity parameters of the spin chain to become dynamical variables. 

\paragraph{Other observables and non-equilibrium dynamics} The long range deformations lead to a
whole family of integrable spin chains whose properties have so far barely been investigated. It is
interesting to study their properties in more detail by computing physically interesting quantities
such as form factors, correlation functions and entanglement entropy. The non-local nature of the
long-range interacting spin chains is expected to result in a behaviour for certain
quantities which is qualitatively different from short range integrable spin chains. Given the
close connection between generalised hydrodynamics and long range deformation, it is also natural to
consider quantum quenches for these spin chains and see how the deformation changes the
non-equilibrium dynamics. 

\subsection*{Acknowledgments}

The authors are grateful to Marius de Leeuw, and D\'avid Horv\'ath for discussions. This work
was partially supported by the National Research Development and Innovation Office of Hungary under
grant  K-16 No.~119204 and also within the Quantum Technology National Excellence Program (Project
No. 2017-1.2.1-NKP-2017-00001). The work of G.T.  and B. P. was also partially supported by the
BME-Nanotechnology FIKP grant of ITM (BME FIKP-NAT), and B.P. also acknowledges support from the
J\'anos Bolyai Research Scholarship of the Hungarian
Academy of Sciences and the
\'UNKP-19-4 New National Excellence Program of the Ministry for Innovation and Technology.

\addcontentsline{toc}{section}{References}
\providecommand{\href}[2]{#2}\begingroup\raggedright\endgroup

\bibliographystyle{utphys}

\end{document}